\documentclass[12pt]{amsart}
\usepackage[cp1251]{inputenc}
\usepackage[english]{babel}
\usepackage{amsaddr}
\usepackage{amsmath}
\usepackage{amssymb}
\usepackage{amsfonts}
\usepackage{epsfig}
\usepackage{srcltx}
\usepackage{subfigure}
\usepackage{cite}
\usepackage{float}
\usepackage[a4paper,  mag=1000, includefoot,  left=2cm, right=2cm, top=2cm, bottom=2cm, headsep=1cm, footskip=1cm ]{geometry}
\usepackage[unicode,colorlinks]{hyperref}
\hypersetup{colorlinks=true, citecolor=blue, linkcolor=blue}

\newtheorem{Th}{Theorem}

\begin{document}
\thispagestyle{empty}

\title[Asymptotic analysis of the autoresonance capture in oscillating systems]
{Asymptotic analysis of the autoresonance capture in oscillating systems with combined excitation}

\author{Oskar Sultanov}

\address{Institute of Mathematics, Ufa Federal Research Center, Russian Academy of Sciences, \newline 112, Chernyshevsky str., Ufa, Russia, 450008}
\email{oasultanov@gmail.com}


\maketitle {\small
\begin{quote}
\noindent{\bf Abstract.}
A mathematical model of autoresonance in nonlinear systems with combined parametric and external chirped frequency excitation is considered. Solutions with a growing amplitude and a bounded phase mismatch are associated with the autoresonant capture. By applying Lyapunov function method we investigate the conditions for the existence and stability of autoresonant modes and construct long-term asymptotics for stable solutions. In particular, we show that unstable regimes become stable when the system parameters pass through certain threshold values.
\medskip

\noindent{\bf Keywords: }{nonlinear oscillations, autoresonance, stability, asymptotics, averaging method, Lyapunov function }

\medskip
\noindent{\bf Mathematics Subject Classification: }{34C15, 34D05, 37B25, 37B55, 93D20}
\end{quote}
}

\section*{Introduction}
Autoresonance is a phenomenon of persistent phase synchronization between an oscillatory nonlinear system and a small resonant chirped frequency perturbation that leads to a significant increase in the amplitude of the oscillator~\cite{LF09}. Autoresonance was fist realized in relativistic particles accelerators in the middle of the twentieth century, and nowadays, it is considered as a universal phenomenon that occurs in a wide range of physical systems~\cite{AN87,FF01,USM10,KHM14,BShF18}. In recent times, the mathematical models of autoresonance are actively studied numerically and analytically (see~\cite{LK08}, and references therein). However, up to now the autoresonance in nonlinear systems with parametric~\cite{KhM01,KG07} and external~\cite{LF08} driving has been treated separately. To the best of our knowledge, the effect of the combined excitation on the capture into autoresonance has not previously been discussed. In this paper, parametric and external perturbations are considered together, and all possible autoresonant modes are described with a help of the Lyapunov function technique.

The paper is organized as follows. In section \ref{sec1}, the mathematical formulation of the problem is given. In section \ref{sec2} particular isolated autoresonant solutions are constructed. Section \ref{sec3} provides linear and nonlinear stability analysis of these solutions. Asymptotic analysis of general autoresonant solutions is contained in section \ref{sec4}. The paper concludes with a brief discussion of the results obtained.

\section{Problem statement}
\label{sec1}
Consider the system of two differential equations:
\begin{gather}
    \label{MS}
    \begin{split}
        \frac{d\rho}{d\tau}+\mu(\tau) \rho \sin (2\psi+\nu) =  \sin\psi, \\
        \rho\Big[\frac{d\psi}{d\tau}-\rho^2+ \lambda \tau+ \mu(\tau) \cos (2\psi+\nu)\Big]= \cos\psi,
    \end{split}
\end{gather}
with the parameters $\lambda\neq 0$, $\nu\in [0,\pi)$ and a smooth bounded given function $\mu(\tau)$. This model system arises in the study of the autoresonance phenomena in a wide class of nonlinear oscillators with a time-periodic slowly-varying perturbation and describes the principal terms of the asymptotic behaviour of the oscillators at the initial step of the capture. We assume that the function $\mu(\tau)$, corresponding to the amplitude of a parametric driving, has a power-law asymptotics:
\begin{gather*}
    \mu(\tau)=\mu_0\tau^{-1/2}+\sum_{k=1}^\infty \mu_k \tau^{-(2k+1)/2},\quad \tau\to\infty, \quad \mu_k={\hbox{\rm const}}.
\end{gather*}
The unknown functions $\rho(\tau)$ and $\psi(\tau)$ play the role of the amplitude and the phase mismatch, respectively. The solutions with $\rho(\tau)\to \infty$ and $\psi(\tau)=\mathcal O(1)$ as $\tau\to\infty$ are associated with the phase-locking~\cite{PRK02} and the autoresonance~\cite{LF09}. In addition to autoresonant modes, system \eqref{MS} also has solutions with a limited amplitude and an increasing phase mismatch. Such solutions correspond to the phase-slipping phenomenon and are not investigated in the framework of the present paper.

Note that \eqref{MS} with $\mu\equiv 0$ corresponds to the case of a pure external excitation, see \cite{LK03,LF08}. If $\mu\equiv 1$, the action of the external driving becomes insignificant and system \eqref{MS} reduces to a weak perturbation of the model of parametric autoresonance~\cite{KhM01,OS16}. The combined excitation takes effect only in the case when $\mu\sim\tau^{-1/2}$.

In this paper, we investigate the conditions for the existence and stability of autoresonant solutions to system \eqref{MS} and construct long-term asymptotics for such solutions. In the first step, particular autoresonant solutions with special power-law asymptotics at infinity are constructed and their Lyapunov stability is investigated. Linear stability analysis allows us to identify only the conditions that guarantee the instability of solutions; the stability at this level cannot be determined due to certain features of the equations. To solve the stability problem, it is necessary to carry out a careful nonlinear analysis with the Lyapunov function method. The presence of stability will ensure the existence of a two-parameter family of autoresonant solutions. For such solutions, the asymptotics at infinity will be constructed at the last step.

As but one example we consider Duffing's oscillator with small combined parametric and external excitation:
\begin{gather}
    \label{ex}
        \frac{d^2u}{dt^2}+(1+\varepsilon B(t)) (u-\gamma \varepsilon u^3)=\varepsilon A(t),
\end{gather}
where $A(t)=\cos\phi(t)$, $B(t)=\beta(1+\varepsilon t)^{-1/2}\cos(2\phi(t)+\nu)$, $\phi(t)=t-\alpha t^2$, $0<\varepsilon,\alpha\ll 1$, $\beta,\gamma={\hbox{\rm const}}$, $\gamma>0$. It is easy to see that equation \eqref{ex} without perturbation ($A(t)\equiv B(t)\equiv 0$) has stable fixed point $(u,u')=(0,0)$ and periodic solutions. Consider the solutions of the perturbed equation with small enough initial values $(u(0),u'(0))$. Numerical analysis shows that for some initial data there are solutions whose the energy $E(t)\equiv (u(t))^2/2-\gamma \varepsilon (u(t))^4/4+(u'(t))^2/2$ significantly increases with time and the phase $\Phi(t)\equiv \arctan \big(u'(t)/u(t)\big)$ is synchronised with the pumping:  $\Delta(t)\equiv\phi(t)+\Phi(t)=\mathcal O(1)$. Such solutions are associated with the capture into the autoresonance. For non-captured solutions, the energy remains small and the phase mismatch increases:  $|\Delta|\to\infty$, see. Fig.~\ref{Pic1}. For the description of a long-term evolution of solutions to equation \eqref{ex}, the method of two scales can be used. We introduce a slow time $\tau=\varepsilon t/(2\kappa)$, ($\kappa=(4/3\gamma)^{1/3}$) and a fast variable $\phi=\phi(t)$, then the asymptotic substitution $u(t)=\kappa \rho(\tau) \cos \big(-\phi+\psi(\tau)\big) +\mathcal O(\varepsilon)$ in equation \eqref{ex} and the averaging over the fast variable $\phi$ in the leading-order term in $\varepsilon$ lead to system \eqref{MS} for the slowly varying functions $\rho(\tau)$ and $\psi(\tau)$ with  $\lambda=8\alpha \varepsilon^{-2} \kappa^2$ and $\mu(\tau)=\beta(1+2\kappa\tau)^{-1/2}\sqrt{2\kappa}/4$. Similarly, system \eqref{MS} appears in many other problems of nonlinear physics related to autoresonance.

\begin{figure}
\centering
\includegraphics[width=0.45\linewidth]{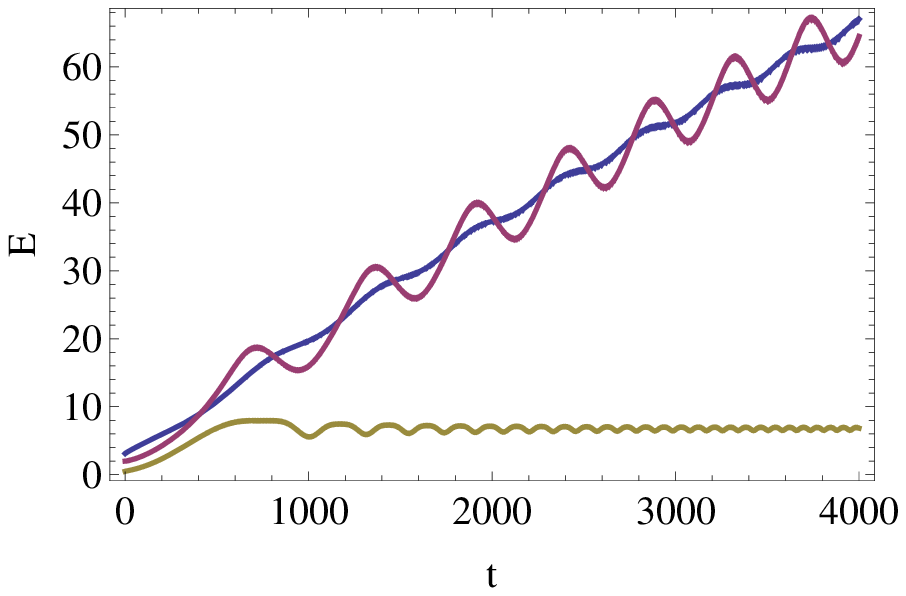}
\hspace{4ex}
\includegraphics[width=0.45\linewidth]{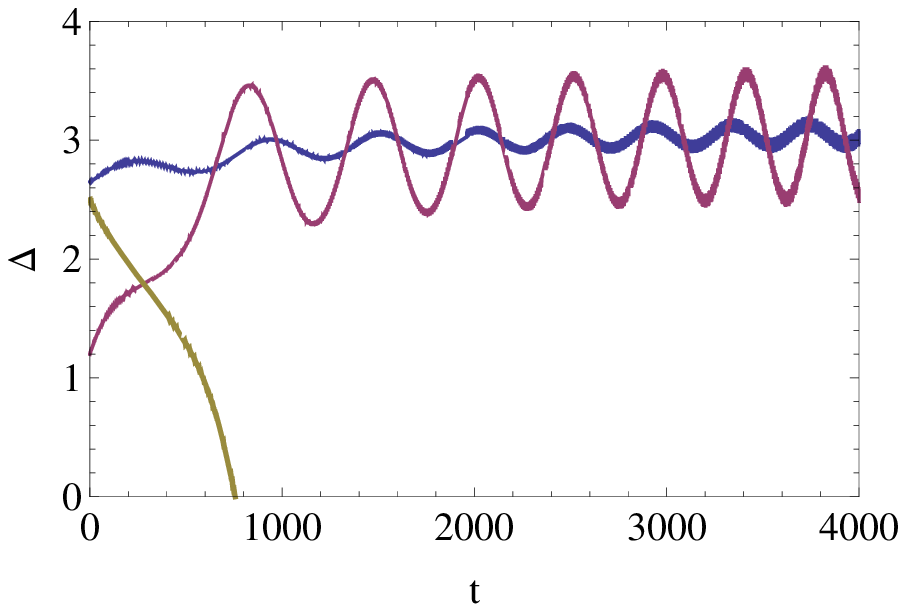}
\caption{The evolution of $E(t)$ and $\Delta(t)$ for solutions of \eqref{ex} with $\varepsilon=10^{-2}$, $\alpha=0.25\cdot 10^{-4}$, $\beta=1$, $\gamma=1/6$, $\nu=0$ and various initial data.} \label{Pic1}
\end{figure}

\section{Particular autoresonant solutions}
\label{sec2}
The simplest asymptotic expansion for autoresonant solutions is constructed in the form of power series with constant coefficients:
\begin{gather}
    \label{asser}
        \rho_\ast(\tau)=\rho_{-1}\tau^{1/2}+\rho_0+\sum_{k=1}^{\infty} \rho_k \tau^{-k/2}, \quad \psi_\ast(\tau)=\psi_0+\sum_{k=1}^\infty\psi_k\tau^{-k/2}, \quad \tau\to\infty.
\end{gather}
Substituting this series in system \eqref{MS} and grouping the terms of the same power of
$\tau$, we can determine all the coefficients $\rho_k$ and $\psi_k$. In particular, $\rho_{-1}= \lambda^{1/2}$, $\rho_0=0$, and $\psi_0$ satisfies the following equation:
\begin{gather}
    \label{teq}
    \mathcal P(\psi_0;\delta,\nu)\equiv \delta\sin (2\psi_0+\nu)-\sin\psi_0=0,  \quad \delta=\mu \lambda^{1/2},
\end{gather}
that has a different number of roots depending on the values of the parameters $\delta$ and $\nu$.
Besides, if the inequality $\mathcal P'(\psi_0;\delta,\nu)\neq 0$ holds, the remaining coefficients $\rho_k$, $\psi_k$ as $k\geq 1$ are determined from the chain of linear equations:
\begin{gather}
    \label{sysFG}
        \begin{split}
            2 \rho_{-1} \rho_k & =\mathcal F_k(\rho_0,\dots,\rho_{k-1}, \psi_0,\dots,\psi_{k-2}),  \\
            \mathcal P'(\psi_0;\delta,\nu)\psi_k & = \mathcal G_k(\rho_{-1},\dots,\rho_{k-1}, \psi_0,\dots,\psi_{k-1}),
    \end{split}
\end{gather}
where
\begin{align*}
    &\mathcal F_1=0, \ \ \mathcal F_2=(\delta \cos (2\psi_0+\nu)-\cos\psi_0)\lambda^{-1/2},\ \ \mathcal F_3=-\psi_1 (2 \delta \sin(2\psi_0+\nu)-\sin\psi_0)\lambda^{-1/2}, \\
    & \mathcal G_1=-\frac{\rho_{-1}}{2},\  \ \mathcal G_2=-\mathcal P''(\psi_0;\delta,\nu)\frac{\psi_1^2}{2}-\mu_2 \rho_{-1}\sin(2\psi_0+\nu), \\
    & \mathcal G_3=  -\mathcal P''(\psi_0;\delta,\nu)\psi_1\psi_2 - \mathcal P'''(\psi_0;\delta,\nu) \frac{\psi_1^3}{6}-\mu \rho_2\sin(2\psi_0+\nu),
\end{align*}
etc. Note that the pair of equations $\mathcal P(\psi_0;\delta,\nu)=0$ and $\mathcal P'(\psi_0;\delta,\nu)=0$ defines the bifurcation curves
\begin{gather*}
    \Gamma_{-}\stackrel{def}{=}\{(\delta,\nu)\in\mathbb R\times[0,\pi): \ell(\delta,\nu)=0, \delta<0\}, \quad \Gamma_{+}\stackrel{def}{=}\{(\delta,\nu)\in\mathbb R\times[0,\pi): \ell(\delta,\nu)=0, \delta>0\}
\end{gather*}
on the parameter plane $(\delta,\nu)$, where $\ell(\delta,\nu) \equiv (4\delta^2-1)^3-27 \delta^2\sin^2\nu$. It can easily be checked that system \eqref{sysFG} is solvable whenever $(\delta,\nu)\not\in\Gamma_{\pm}$. The bifurcation curves divide the parameter plane into the following parts (see Fig.~\ref{figom}):
\begin{align*}
   & \Omega_{-}\stackrel{def}{=}\{(\delta,\nu)\in\mathbb R\times[0,\pi): \ell(\delta,\nu)<0\}, \quad
   & \Omega_{+}\stackrel{def}{=}\{(\delta,\nu)\in\mathbb R\times[0,\pi): \ell(\delta,\nu)>0\}.
\end{align*}
If $(\delta,\nu)\in\Omega_{-} $, the equation $\mathcal P(\psi_0;\delta,\nu)=0$ has four different roots on the interval $[0,2\pi)$. In the case $(\delta,\nu)\in\Omega_{+}$, there are only two different roots.
\begin{figure}
\centering
\includegraphics[width=0.45\linewidth]{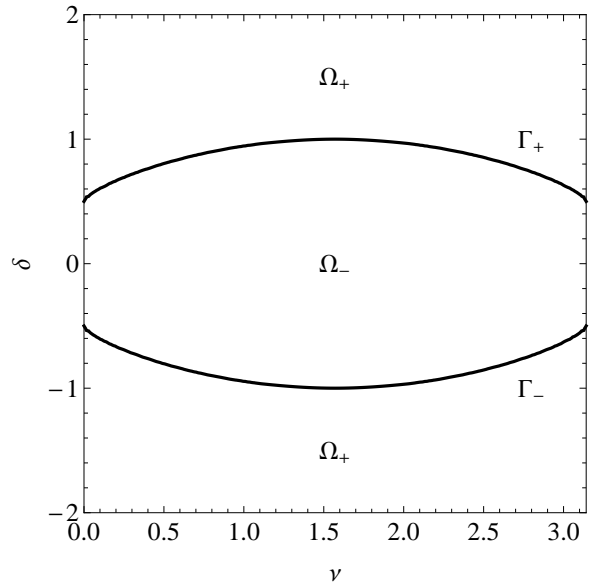}
\caption{Partition of the parameter plane.} \label{figom}
\end{figure}
Thus we have
\begin{Th}
    If $(\delta,\nu)\in\Omega_{+}$, system \eqref{MS} has 4 different solutions with asymptotic expansion in the form of a series \eqref{asser}.
    If $(\delta,\nu)\in\Omega_{-}$, system \eqref{MS} has 2 different solutions with asymptotic expansion in
the form of a series \eqref{asser}.
\end{Th}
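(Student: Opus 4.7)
The plan is to substitute the formal series \eqref{asser} together with the prescribed expansion of $\mu(\tau)$ into \eqref{MS} and equate like powers of $\tau^{-1/2}$. The second equation, balanced at order $\tau^{3/2}$, forces $\rho_{-1}(\lambda-\rho_{-1}^2)=0$ and hence $\rho_{-1}=\lambda^{1/2}$; the next-order balance gives $\rho_0=0$. The first equation at the leading order $\tau^0$ then reduces exactly to $\mathcal{P}(\psi_0;\delta,\nu)=0$ with $\delta=\mu_0\lambda^{1/2}$. Consequently, the number of formal solutions of the shape \eqref{asser} is governed by the number of distinct roots $\psi_0\in[0,2\pi)$ of this trigonometric equation, and for each simple root the remaining coefficients $(\rho_k,\psi_k)_{k\geq 1}$ are produced by the triangular linear system \eqref{sysFG}.

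Next, I will count the simple roots of $\mathcal{P}(\psi_0;\delta,\nu)=\delta\sin(2\psi_0+\nu)-\sin\psi_0$ on $[0,2\pi)$. As already noted in the text, $\Gamma_-\cup\Gamma_+$ is exactly the discriminant locus where $\mathcal{P}=\mathcal{P}'=0$, so on each connected component of the complement the number of simple roots is locally constant. I will anchor the count on the slice $\nu=0$, where the identity $\mathcal{P}(\psi_0;\delta,0)=\sin\psi_0\,(2\delta\cos\psi_0-1)$ holds: the roots $\psi_0\in\{0,\pi\}$ are always present, while the pair $\cos\psi_0=1/(2\delta)$ contributes two additional roots precisely when $|\delta|>1/2$. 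Since $\ell(\delta,0)=(4\delta^2-1)^3$, the condition $|\delta|>1/2$ at $\nu=0$ singles out $\Omega_+$ (four simple roots) and $|\delta|<1/2$ singles out $\Omega_-$ (two simple roots). Local constancy of the simple-root count then propagates these values over every connected component of $\Omega_\pm$ that meets the anchor slice, yielding four simple roots throughout $\Omega_+$ and two throughout $\Omega_-$.

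With the root count in hand, for every simple root $\psi_0$ the non-degeneracy $\mathcal{P}'(\psi_0;\delta,\nu)\neq 0$ holds, and together with $\rho_{-1}\neq 0$ this makes \eqref{sysFG} uniquely solvable at every order $k\geq 1$. Distinct choices of the leading phase $\psi_0$ produce distinct series, since they already differ at the constant term of $\psi_\ast$. Assembling the counts yields four formal series of the form \eqref{asser} in $\Omega_+$ and two in $\Omega_-$, which is the claim of the theorem.

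The main difficulty lies in the global root count of the second step: one must confirm that $\Gamma_\pm$ are the \emph{only} loci across which simple roots can appear or disappear, and that each connected component of $\Omega_\pm$ can be joined to the anchor slice $\nu=0$ by a path inside $\Omega_\pm$. I would handle this through the Weierstrass substitution $t=\tan(\psi_0/2)$, which rewrites $\mathcal{P}=0$ as a real quartic in $t$ (with a possible root at infinity corresponding to $\psi_0=\pi$) whose discriminant is (up to a nonzero factor) $\ell(\delta,\nu)$. The standard local constancy of the real-root count for a continuous family of quartics with only simple roots then propagates the anchor count over each connected piece of $\{\ell\neq 0\}$, and the symmetry $\delta\mapsto-\delta$ together with the explicit shape of the curve $\ell=0$ in the $(\delta,\nu)$-strip ensures that every component of $\Omega_\pm$ meets the line $\nu=0$.
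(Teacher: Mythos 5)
Your formal construction --- substituting \eqref{asser}, balancing powers of $\tau$ to get $\rho_{-1}=\lambda^{1/2}$, $\rho_0=0$, the phase equation \eqref{teq} with $\delta=\mu_0\lambda^{1/2}$, and the triangular chain \eqref{sysFG}, uniquely solvable at each simple root since $\mathcal P'(\psi_0;\delta,\nu)\neq 0$ and $\rho_{-1}\neq 0$ --- coincides with the paper's computation. Your root count (four simple roots throughout $\Omega_+$, two throughout $\Omega_-$, anchored on the slice $\nu=0$ via the factorization $\mathcal P(\psi_0;\delta,0)=\sin\psi_0\,(2\delta\cos\psi_0-1)$ and propagated by local constancy off the discriminant locus $\ell=0$) agrees with the theorem and with the paper's worked example at $\nu=0$; in fact it silently corrects a typo in the paper, whose sentence just before the theorem swaps the roles of $\Omega_\pm$. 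You actually treat the counting more carefully than the paper, which simply asserts it; your Weierstrass-quartic route is workable, though note the leading coefficient $\delta\sin\nu$ of the quartic degenerates exactly on the anchor slice $\nu=0$, and the connectivity of each component of $\Omega_\pm$ to that slice is left as a sketch. A cleaner variant: off $\ell=0$ every root of $\mathcal P(\cdot;\delta,\nu)$ on the circle $\mathbb R/2\pi\mathbb Z$ is simple, so by the implicit function theorem and compactness of the circle the number of roots is locally constant on $\{\ell\neq 0\}$, with no substitution needed.

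The genuine gap is the final step. The theorem asserts that system \eqref{MS} \emph{has solutions} with asymptotic expansion \eqref{asser}; your argument produces only formal power series, and your closing sentence identifies the count of formal series with ``the claim of the theorem.'' For a nonlinear non-autonomous system, the existence of an actual solution asymptotic to a given (generally divergent) formal series is not automatic, and this is precisely the content the paper supplies by citation: the results of Kuznetsov~\cite{AN89} and Kozlov--Furta~\cite{KF13} guarantee that each formal solution corresponds to a genuine solution $\rho_\ast(\tau)$, $\psi_\ast(\tau)$ defined for $\tau\geq\tau_\ast$ admitting \eqref{asser} as its asymptotic expansion, and the comparison theorems of~\cite{LK14} are then invoked to extend these solutions to the whole semi-axis. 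Without appealing to (or reproving) such a Kuznetsov-type existence theorem together with the extension argument, the proposal establishes the number of formal expansions but does not prove the stated theorem about solutions of \eqref{MS}.
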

The existence of solutions $\rho_\ast(\tau)$, $\psi_\ast(\tau)$ with the asymptotics \eqref{asser} as $\tau\geq \tau_\ast$ follows from~\cite{AN89,KF13}, and comparison theorems~\cite{LK14} applied to system \eqref{MS} ensure that the solutions can be extended to the semi-axis. In the next sections, the stability of such solutions is discussed, as well as the asymptotics for two-parameter family of autoresonant solutions to system \eqref{MS} is constructed.

As an example, let us fix $\nu=0$, then $\Gamma_\pm=(\pm 1/2, 0)$, $\Omega_-=\{|\delta|<1/2,\nu=0\}$, $\Omega_+=\{|\delta|>1/2,\nu=0\}$. For $(\delta,\nu)\in\Omega_+$, equation \eqref{teq} has 4 roots: $\{0,\pi,\pm\arccos(2\delta)^{-1}\}$; in the opposite case $(\delta,\nu)\in\Omega_+$, there are only 2 roots: $\{0,\pi\}$.

\section{Stability analysis}
\label{sec3}

1. {\bf Linearization}. Let $\rho_\ast(\tau)$, $\psi_\ast(\tau)$  be one of the particular autoresonant solutions with asymptotics \eqref{asser}.
Then, the change of variables $\rho(\tau)=\rho_\ast(\tau)+r(\tau)$, $\psi(\tau)=\psi_\ast(\tau)+p(\tau)$ leads to the following system
\begin{align}
\label{eq00}
\begin{split}
&\frac{dr}{d\tau}=\sin(\psi_\ast+p)-\sin\psi_\ast + m(\tau)\big[\rho_\ast\sin(\psi_\ast+\nu)- (\rho_\ast+r)\sin(2\psi_\ast+2p+\nu)\big], \\
&\frac{dp}{d\tau}=2\rho_\ast r+\frac{\cos(\psi_\ast+p)}{\rho_\ast+r}-\frac{\cos\psi_\ast}{\rho_\ast}+ m(\tau)\big[\cos(2\psi_\ast+\nu)-\cos(\psi_\ast+p+\nu)\big],
\end{split}
\end{align}
with a fixed point at $(0,0)$. Consider the linearized system:
\begin{gather*}
    \frac{d}{d\tau}\begin{pmatrix} r \\ p \end{pmatrix} = {\bf A}(\tau) \begin{pmatrix} r \\ p \end{pmatrix}, \ \
    {\bf A}(\tau)=
        \begin{pmatrix}
            \displaystyle  -m(\tau) \sin(2\psi_\ast+\nu) & \displaystyle \cos\psi_\ast-2  \rho_\ast m(\tau) \cos(2\psi_\ast+\nu) \\
            \displaystyle 2\rho_\ast - \frac{\cos\psi_\ast}{\rho_\ast^2} & \displaystyle 2 m(\tau)\sin(2\psi_\ast+\nu)-\frac{\sin\psi_\ast}{\rho_\ast} \end{pmatrix}.
\end{gather*}
It can easily be checked that the roots $z_{\pm}(\tau)$ of the characteristic equation $|{\bf A}(\tau)-z {\bf I}|=0$ can be represented in the form $z_{\pm}(\tau)=x(\tau)\pm \sqrt{y(\tau)}$, where
\begin{gather*}
    \begin{split}
    & x(\tau)=-\frac{1}{4\tau}+\mathcal O(\tau^{-3/2}), \\
    & y(\tau)=(4\lambda\tau)^{1/2}\Big(-\mathcal P'(\psi_0;\delta,\nu)-\mathcal P''(\psi_0;\delta,\nu)\psi_1 \tau^{-1/2}+\mathcal O(\tau^{-1})\Big), \quad \tau\to\infty.
    \end{split}
\end{gather*}
Therefore, if $\mathcal P'(\psi_0;\delta,\nu)<0$, the leading asymptotic terms of the eigenvalues $z_\pm(\tau)$ are real of different signs. This implies that the fixed point $(0,0)$ of \eqref{eq00} is a saddle in the asymptotic limit, and the corresponding autoresonant solutions $\rho_\ast(\tau)$, $\psi_\ast(\tau)$ to system \eqref{MS} are unstable.
In the opposite case, when $\mathcal P'(\psi_0;\delta,\nu)>0$, we have $z_\pm(\tau)=\pm i (4\lambda\tau)^{1/4} \sqrt{\mathcal P'(\psi_0;\delta,\nu)}+\mathcal O(\tau^{-1/4})$, $\Re z_\pm(\tau)=\mathcal O(\tau^{-1})$ as $\tau\to\infty$ and the fixed point $(0,0)$ is a center in the asymptotic limit. Note that in the case $\Re z_\pm(\tau)\to 0$ as $\tau\to\infty$, linear stability analysis fails and it is necessary to consider nonlinear terms of equations. Indeed, let us consider as an example the non-autonomous system of differential equations:
\begin{gather}
    \label{es}
    \frac{da}{dt}=a b t^{-1/4}-at^{-1},\quad \frac{db}{dt}=-\frac{b}{2}t^{-1}.
\end{gather}
The solution has the form: $a(t)=a_0 t^{-1}\exp(4 b_0 t^{1/4})$, $b(t)=b_0 t^{-1/2}$ with arbitrary parameters $a_0$ and $b_0$. These formulas indicate that the fixed point $(0,0)$ of system \eqref{es} is unstable. However, both roots of the characteristic equation for the corresponding linearized system are negative: $z_{\pm}(t)=(-3\pm 1)/(4t)<0$.

Thus we have
\begin{Th}
    If $\mathcal P'(\psi_0;\delta,\nu)<0$, the solution $\rho_\ast(\tau)$, $\psi_\ast(\tau)$ with asymptotics \eqref{asser} is unstable.
\end{Th}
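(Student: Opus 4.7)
The plan is to promote the linear spectral picture already obtained into a genuine nonlinear instability statement via a Chetaev-type argument. Under the hypothesis $\mathcal P'(\psi_0;\delta,\nu)<0$ the asymptotics derived in the text give
\begin{gather*}
    z_\pm(\tau)=\pm c\,\tau^{1/4}(1+o(1)),\qquad c:=\sqrt{-4\lambda\,\mathcal P'(\psi_0;\delta,\nu)}>0,
\end{gather*}
so the matrix $\mathbf{A}(\tau)$ is hyperbolic for all sufficiently large $\tau$, with an unstable direction whose growth rate tends to infinity. The small correction $x(\tau)=\mathcal O(\tau^{-1})$ is dominated by $|z_\pm(\tau)|\asymp\tau^{1/4}$, hence is harmless.

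First I would write \eqref{eq00} as $\dot X=\mathbf{A}(\tau)X+N(\tau,X)$, where $X=(r,p)^\top$, by Taylor-expanding the sines, cosines, and the factor $(\rho_\ast+r)^{-1}$ about $r=0$; using that $\rho_\ast(\tau)\to\infty$, the remainder satisfies $|N(\tau,X)|\le C|X|^2$ uniformly for $\tau\geq\tau_\ast$ and $|X|\leq r_0$. Next I would build eigenvectors $v_\pm(\tau)$ of $\mathbf{A}(\tau)$ that inherit the power-series structure of $\rho_\ast,\psi_\ast$, so that the matrix $T(\tau)=(v_+(\tau)\mid v_-(\tau))$ is smooth, bounded, boundedly invertible, and satisfies $\|T^{-1}(\tau)\dot T(\tau)\|=\mathcal O(\tau^{-1})$. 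After the linear change $X=T(\tau)Y$ the system becomes
\begin{gather*}
    \dot Y=\Lambda(\tau)Y+R(\tau)Y+\widetilde N(\tau,Y),\qquad \Lambda(\tau)=\mathrm{diag}\bigl(z_+(\tau),z_-(\tau)\bigr),
\end{gather*}
with $\|R(\tau)\|=\mathcal O(\tau^{-1})$ and $|\widetilde N(\tau,Y)|\le C_1|Y|^2$.

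The core of the argument is a Chetaev function, sign-indefinite in the unstable/stable directions:
\begin{gather*}
    V(Y)=y_1^2-K y_2^2,
\end{gather*}
with a constant $K\geq 1$ fixed large enough to dominate the off-diagonal entries of $R(\tau)$. Differentiating along trajectories gives
\begin{gather*}
    \dot V=2z_+(\tau)y_1^2+2K|z_-(\tau)|y_2^2+\mathcal E(\tau,Y),\qquad |\mathcal E(\tau,Y)|\le C_2\bigl(\tau^{-1}|Y|^2+|Y|^3\bigr).
\end{gather*}
For $|Y|$ below a suitable $\varrho_0>0$ and $\tau$ beyond some $\tau_1$, the main term exceeds $c\,\tau^{1/4}(y_1^2+Ky_2^2)$ while the error $\mathcal E$ is absorbed, yielding $\dot V\geq \tfrac{c}{2}\tau^{1/4}V$ on the cone $\{V>0,\ |Y|<\varrho_0\}$. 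Since this cone intersects every neighbourhood of the origin (take $y_2=0$ and $y_1$ arbitrarily small), Chetaev's theorem forces the corresponding trajectories to leave the ball of radius $\varrho_0$ in finite time. Translating back via $X=T(\tau)Y$ gives instability of the trivial solution of \eqref{eq00}, i.e.\ of $\rho_\ast,\psi_\ast$ for \eqref{MS}.

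The main obstacle is the slow time-dependence of the linearization: the classical constant-coefficient instability criterion does not apply verbatim, and one must verify that the frame-rotation term $R(\tau)$ and the nonlinearity $\widetilde N$ cannot defeat the growing unstable eigenvalue. This is ensured by the explicit order estimates $\|R\|=\mathcal O(\tau^{-1})\ll\tau^{1/4}$ and the quadratic smallness of $\widetilde N$, which together render the Chetaev computation routine once the asymptotic diagonalization has been set up.
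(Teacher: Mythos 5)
Your strategy (upgrading the linear saddle to nonlinear instability by a Chetaev cone argument) is legitimate and in fact more ambitious than the paper, whose proof of this theorem is the linear analysis itself: it computes $z_\pm(\tau)=x(\tau)\pm\sqrt{y(\tau)}$, notes that for $\mathcal P'(\psi_0;\delta,\nu)<0$ the leading terms are real of opposite sign, and concludes instability from the saddle structure, reserving the Lyapunov-function machinery for the stable case only. However, one step of your plan fails as stated: there is no eigenvector matrix $T(\tau)$ of ${\bf A}(\tau)$ that is bounded together with its inverse. The linearization is strongly non-normal: $A_{21}=2\rho_\ast-\cos\psi_\ast/\rho_\ast^2\sim 2\sqrt{\lambda\tau}$, while $A_{12}=\mathcal O(1)$ and the diagonal entries are $\mathcal O(\tau^{-1/2})$, so the normalized eigenvectors collapse onto the $p$-axis, $v_\pm(\tau)\to(0,\pm 1)$, and $\det T(\tau)=\mathcal O(\tau^{-1/4})\to 0$. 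Hence either $T$ or $T^{-1}$ is unbounded; the uniform bound $|\widetilde N(\tau,Y)|\le C_1|Y|^2$ picks up a factor $\|T^{-1}(\tau)\|\sim\tau^{1/4}$, and, more seriously, the final step ``translating back via $X=T(\tau)Y$'' does not yield instability in the original variables, since $|Y|=\varrho_0$ at the exit time only forces $|X|\gtrsim\tau^{-1/4}\varrho_0$, which tends to zero. (A minor slip besides: the growth constant is $c=(4\lambda)^{1/4}\sqrt{-\mathcal P'(\psi_0;\delta,\nu)}$, not $\sqrt{-4\lambda\,\mathcal P'}$.)

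The gap is repairable, and the paper itself indicates the natural fix: first balance the system by the anisotropic substitution of type \eqref{exch2}, $r=\omega_0\tau^{-1/4}R$, $p=\Psi$, $\eta=\varkappa\tau^{5/4}$ with $\omega_0=|\mathcal P'(\psi_0;\delta,\nu)|^{1/2}(4\lambda)^{-1/4}$ (used there for the stable case). In the new time the linear part has real eigenvalues of opposite sign that are $\mathcal O(1)$ up to decaying corrections, the eigenframe is uniformly nondegenerate with slowly varying rotation, and the nonlinearity is uniformly quadratic; your Chetaev computation then goes through essentially verbatim. One must still add the remark that at the exit from the cone the state lies, up to a factor controlled by $K$, along the unstable direction, whose $\Psi$-component (which equals $p$) is bounded below by a fixed fraction of $\varrho_0$; this is what transfers the escape in the rescaled variables into instability of the origin of \eqref{eq00}, i.e.\ of the solution $\rho_\ast,\psi_\ast$ of \eqref{MS}.
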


2. {\bf Lyapunov function}. In the case when linear stability analysis fails, the stability problem can be solved using the Lyapunov function method. We have
\begin{Th}
If $\mathcal P'(\psi_0;\delta,\nu)>0$, then the solution $\rho_\ast(\tau)$, $\psi_\ast(\tau)$ with asymptotics \eqref{asser} is asymptotically stable. Moreover, there exist $d_0>0$ and $\tau_0>0$ such that for all $(r_0,\phi_0)$: $(r_0-\rho_\ast(\tau_0))^2+(\phi_0-\psi_\ast(\tau_0))^2<d_0^2$ the solution $\rho(\tau)$, $\psi(\tau)$ to system \eqref{MS} with initial data $\rho(\tau_0)=r_0$, $\psi(\tau_0)=\phi_0$ has the asymptotics:
\begin{gather}
\label{asth}
\rho(\tau)=\sqrt{\lambda\tau}+o(\tau^{-1/4}), \quad \psi(\tau)=\psi_0+o(1), \quad \tau>\tau_0.
\end{gather}
\end{Th}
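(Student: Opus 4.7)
The plan is to analyse the perturbation system \eqref{eq00}, whose fixed point at the origin corresponds to the reference autoresonant solution, by constructing a Lyapunov function $V(r,p,\tau)$ that is positive definite near the origin and decays along trajectories at a rate matching the dissipation $\Re z_\pm = -1/(4\tau) + O(\tau^{-3/2})$ revealed by the linearisation. Since the leading eigenvalues are purely imaginary, the construction must respect the approximate Hamiltonian structure of the averaged dynamics.

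A natural first candidate is
\[
V_0(r,p,\tau) = \rho_\ast(\tau)\,r^2 + W(p,\tau),
\]
with $W(p,\tau) = -\int_0^p f(s,\tau)\,ds$, where $f(p,\tau)$ denotes the $p$-dependent part of $dr/d\tau$ in \eqref{eq00}. A direct computation gives $W(0,\tau)=\partial_p W(0,\tau)=0$ and $\partial_p^2 W(0,\tau)\to \mathcal P'(\psi_0;\delta,\nu)>0$ as $\tau\to\infty$, so $V_0\asymp\sqrt{\tau}\,r^2+p^2$ is positive definite on a small ball around $(0,0)$ for $\tau$ large. By construction, the mixed $rp$-term in $2\rho_\ast r\,\dot r + W_p\,\dot p$ cancels identically along trajectories of \eqref{eq00}.

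Computing $\dot V_0$ still reveals an $O(r^2)$ contribution $-2\mu(\tau)\rho_\ast(\tau)\sin(2\psi_\ast+2p+\nu)\,r^2$, whose leading value (using the fixed-point identity $\delta\sin(2\psi_0+\nu)=\sin\psi_0$) is $-2\sin\psi_0\,r^2$ and is not a priori of the right sign. I would offset this by a small cross-correction
\[
V_1(r,p,\tau) = \mu(\tau)\sin\bigl(2\psi_\ast(\tau)+\nu\bigr)\,r\,p,
\]
of order $\tau^{-1/2}|rp|$, dominated by $V_0$ (so $V=V_0+V_1$ remains positive definite for $\tau$ large), and whose $r^2$-contribution to $\dot V$ is $+2\mu\rho_\ast\sin(2\psi_\ast+\nu)\,r^2$, cancelling the offending term at leading order. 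After this cancellation the remaining contributions to $\dot V$ are quadratic forms in $(r,p)$ with coefficients of size $O(\tau^{-1})$ together with higher-order terms, so that one expects
\[
\frac{dV}{d\tau}\leq -\frac{V}{4\tau}+o\!\left(\frac{V}{\tau}\right)
\]
in a small neighbourhood, and Gronwall's inequality gives $V(\tau)\leq C\tau^{-1/4}$.

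From $c\sqrt{\tau}\,r^2\leq V\leq C\tau^{-1/4}$ and $c\,p^2\leq V\leq C\tau^{-1/4}$ one obtains $r(\tau)=O(\tau^{-3/8})=o(\tau^{-1/4})$ and $p(\tau)=O(\tau^{-1/8})=o(1)$; combined with the known expansions $\rho_\ast(\tau)=\sqrt{\lambda\tau}+O(\tau^{-1/2})$ and $\psi_\ast(\tau)=\psi_0+O(\tau^{-1/2})$ from \eqref{asser}, this yields \eqref{asth}, with the two-parameter family obtained by continuation in the initial data. The main difficulty, warned against by the example \eqref{es}, is that the asymptotic dissipation is only marginal ($\sim 1/\tau$), so the cancellations above must be accurate enough that what survives in $\dot V$ genuinely scales as $V/\tau$ rather than $V/\sqrt\tau$; this requires careful use of the leading asymptotic coefficients $\rho_{-1}=\sqrt\lambda$, $\rho_0=0$, $\psi_1$ and of the fixed-point identity satisfied by $\psi_0$.
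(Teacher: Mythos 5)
Your overall strategy (an energy-type Lyapunov function whose decay captures the marginal dissipation $\sim V/\tau$) is the same as the paper's, which carries it out after the rescaling \eqref{exch2}; but your specific candidate $V=V_0+V_1$ does not achieve the cancellation you assert, and the step ``the remaining contributions to $\dot V$ are quadratic forms in $(r,p)$ with coefficients of size $O(\tau^{-1})$ together with higher-order terms'' is exactly where the argument breaks. Cubic terms survive with coefficients that are not small: the exact equation for $\dot p$ contains the term $r^2$ (from $\rho^2-\rho_\ast^2=2\rho_\ast r+r^2$), which contributes $W_p\,r^2\approx\mathcal P'(\psi_0)\,p\,r^2$ to $\dot V_0$ with an $O(1)$ coefficient; the residual of your $r^2$-cancellation is $2m\rho_\ast\bigl[\sin(2\psi_\ast+\nu)-\sin(2\psi_\ast+2p+\nu)\bigr]r^2=O(r^2|p|)$, again with an $O(1)$ coefficient; and the parts of $W_p\cdot[\hbox{corrections to }\dot p]$ and of $\mu\sin(2\psi_\ast+\nu)\,p\,\dot r$ beyond quadratic order leave terms of size $O(\tau^{-1/2}|p|^3)$. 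None of these is dominated by the target dissipation $\tau^{-1}(\sqrt{\tau}\,r^2+p^2)$ on a fixed ball (for instance $\tau^{-1/2}|p|^3\leq(\tau^{1/2}d)\,d^2/\tau$ with $d^2=\sqrt\tau r^2+p^2$), precisely because the dissipation is only marginal. This is why the paper must add the cubic counterterm $v_1\eta^{-3/5}$ in \eqref{LF} (together with the $R^3$ term already contained in $H$) to remove the analogous $\eta^{-3/5}$ contribution $\mathcal P(\psi_0+\Psi)\int_0^\Psi\mathcal P\,d\phi$ before any sign argument is possible; your proposal has no analogue of this correction and does not notice the need for it.

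Even at the critical quadratic order $\tau^{-1}$ your claimed inequality $\dot V\leq -V/(4\tau)+o(V/\tau)$ is asserted rather than derived, and it is doubtful for $V_0+V_1$ as written: after your cancellations the only manifestly sign-definite quadratic term at this order is $\rho_\ast'\,r^2=+\tfrac{1}{2\tau}\rho_\ast r^2$, which is \emph{positive} (it encodes the adiabatic growth of the energy), while the compensating negative contributions sit in sign-indefinite cross terms. In the paper the pointwise negativity in both the $R^2$ and $\Psi^2$ directions is obtained only after adding the cross correction $v_2=-R\Psi/10$ (a term $\propto\tau^{-1}rp$ in your variables), which converts the $-2\Psi^2/(5\eta)$ produced by the Hamiltonian part into $-(R^2+\Psi^2)/(5\eta)$. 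Without these two counterterms the hypothesis of your Gronwall step is unverified. A minor additional point: even with a corrected bound of the form $\dot V\leq-\tfrac{1-\sigma}{4\tau}V$ one obtains $V=O(\tau^{-1/4+\epsilon})$, not $C\tau^{-1/4}$ (the paper itself only gets the exponent $\varsigma<1/5$ in $\eta$), but this weaker rate still suffices for \eqref{asth}; the essential gap is the missing correction terms and the unproved sign-definiteness at the marginal order.
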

 \begin{proof}
In system \eqref{MS} we make the change of variables
\begin{gather}
\label{exch2}
\rho(\tau)=\rho_\ast(\tau)+\omega_0 \tau^{-1/4} R(\eta),\quad \psi(\tau)=\psi_\ast(\tau)+\Psi(\eta), \quad \eta=\varkappa\tau^{5/4},
\end{gather}
$\omega_0= \big(\mathcal P'(\psi_0;\delta,\nu)\big)^{1/2}(4\lambda)^{-1/4}$, $\varkappa=4/5$ and for new functions $R(\eta)$, $\Psi(\eta)$ we study the stability of the fixed point $(0,0)$ for the following system
\begin{gather}
    \label{ham}
    \frac{dR}{d\eta}=-\partial_\Psi H(R,\Psi,\eta), \quad
    \frac{d\Psi}{d\eta}=\partial_R H(R,\Psi,\eta)+F(R,\Psi,\eta),
\end{gather}
where
\begin{eqnarray*}
    H  & = & \omega_0 \varkappa^{2/5} \eta^{-2/5} \rho_\ast  R^2   + \frac{1}{\omega_0}\Big(\cos(\psi_\ast+\Psi)-\cos\psi_\ast+\Psi\sin\psi_\ast\Big) + \omega_0^2 \varkappa^{3/5} \eta^{-3/5} \frac{R^3}{3}-\eta^{-1}\frac{R\Psi}{5} \\
                        &   & - \frac{m\rho_\ast}{2\omega_0} \Big( \cos(2\psi_\ast+2\Psi+\nu)-\cos(2\psi_\ast+\nu)-2\Psi \sin(2\psi_\ast+\nu)\Big)\\
                        &   & -\varkappa^{1/5} m \eta^{-1/5} \Big(\cos(2\psi_\ast+2\Psi+\nu)-\cos(2\psi_\ast+\nu)\Big)\frac{R}{2}
\end{eqnarray*}
and
\begin{eqnarray*}
    F  & = & \eta^{-1}\frac{\Psi}{5} \\
         &   & + \varkappa^{1/5}\eta^{-1/5} \Big[\frac{\cos(\psi_\ast+\Psi)}{\rho_\ast+\omega_0\varkappa^{1/5}\eta^{-1/5}R}-\frac{\cos\psi_\ast}{\rho_\ast} + \frac{m}{2} \big(\cos(2\psi_\ast+\nu)-\cos(2\psi_\ast+2\Psi+\nu)\big) \Big].
\end{eqnarray*}
The construction of the Lyapunov function proposed here is based on the asymptotic behavior of the right-hand side of system \eqref{ham} as $\eta\to\infty$. By taking into account \eqref{asser} one can readily write out the asymptotics of the functions $H(R,\Psi,\eta)$ and $F(R,\Psi,\eta)$:
\begin{gather*}
    H(R,\Psi,\eta)=H_0(R,\Psi)+ H_1(\Psi)\eta^{-2/5}+\mathcal O(\eta^{-3/5}), \\
    F(R,\Psi,\eta)=F_1 (\Psi)\eta^{-3/5}+F_2(\Psi)\eta^{-1}+\mathcal O(\eta^{-6/5}),
\end{gather*}
where
\begin{gather*}
    H_0 = \omega_0 \lambda^{1/2} R^2 + \frac{1}{\omega_0}\int\limits_0^\Psi \mathcal P(\psi_0+\phi;\delta,\nu)\,d\phi,\quad
    H_1 = \frac{\psi_1\varkappa^{2/5}}{\omega_0} \Big(\mathcal P(\psi_0+\Psi;\delta,\nu)- \mathcal P'(\psi_0;\delta,\nu) \Psi \Big), \\
    F_1 = \frac{\varkappa^{3/5}}{\lambda^{1/2}} \int\limits_0^\Psi \mathcal P(\psi_0+\phi;\delta,\nu)\,d\phi,\quad
    F_2 = \frac{\Psi}{5} + \frac{\psi_1\varkappa }{\lambda^{1/2}} \mathcal P(\psi_0+\Psi;\delta,\nu).
\end{gather*}
Note that all asymptotic estimates written out here and bellow in the form $\mathcal O(\eta^{-q})$ and $\mathcal O(d^q)$, $(d=\sqrt{R^2+\Psi^2})$, are uniform with respect to $(R,\Psi,\eta)$ in the domain $\mathcal D (d_\ast,\eta_\ast)=\{ (R,\Psi,\eta)\in\mathbb R^3: d<d_\ast, \eta>\eta_\ast \}$, where $d_\ast,\eta_\ast,q={\hbox{\rm const}}>0$.

A Lyapunov function candidate is constructed of the
form:
\begin{gather}
    \label{LF}
    V(R,\Psi,\eta)=\frac{1}{\omega_0 \lambda^{1/2}} \Big[ H(R,\Psi,\eta)+v_1(R,\Psi)\eta^{-3/5}+v_2(R,\Psi)\eta^{-1}\Big],
\end{gather}
where
\begin{gather*}
    v_1=\varkappa^{3/5} R\Big[\frac{2\omega_0^2 R^2}{3} + \lambda^{-1/2}\int\limits_0^\Psi \mathcal P(\psi_0+\phi;\delta,\nu)\, d\phi\Big], \quad v_2= -\frac{R\Psi}{10}.
\end{gather*}
Since $H_0(R,\Psi)=\omega_0 \lambda^{1/2}(R^2+\Psi^2)+\mathcal O(d^3)$ and $v_i(R,\Psi)=\mathcal O(d^2)$ as $d\to 0$, then for all  $0<\sigma<1$ there exist $d_0>0$ and $\eta_0>0$ such that
\begin{gather*}
    (1-\sigma)d^2\leq V(R,\Psi,\eta)\leq (1+\sigma)d^2
\end{gather*}
for $(R,\Psi,\eta)\in\mathcal D(d_0,\eta_0)$. The derivatives of  $H(R,\Psi,\eta)$, $v_1(R,\Psi)$, and $v_2(R,\Psi)$ with respect to $\eta$ along the trajectories of system \eqref{ham} have the following asymptotics:
\begin{eqnarray*}
    \frac{dH}{d\eta}\Big|_\eqref{ham} & = & \eta^{-3/5} \frac{\varkappa^{3/5}}{\omega_0^2 \lambda} \mathcal P(\psi_0+\Psi;\delta,\nu)\int\limits_0^\Psi \mathcal P(\psi_0+\phi;\delta,\nu)\,d\phi  - \eta^{-1} \frac{2\Psi^2}{5}+\\
                        & & + \mathcal O(d^3) \mathcal O(\eta^{-1})+\mathcal O(d^2)\mathcal O(\eta^{-6/5}), \\
    \frac{dv_1}{d\eta}\Big|_\eqref{ham} & = & -\frac{\varkappa^{3/5}}{\omega_0^2 \lambda} \mathcal P(\psi_0+\Psi;\delta,\nu)\int\limits_0^\Psi \mathcal P(\psi_0+\phi;\delta,\nu)\,d\phi + \mathcal O(d^3)\mathcal O(\eta^{-2/5})+\mathcal O(d^2)\mathcal O(\eta^{-3/5}), \\
    \frac{dv_2}{d\eta}\Big|_\eqref{ham} & =& \frac{\Psi^2-R^2}{5}+\mathcal O(d^3)+\mathcal O(d^2)\mathcal O(\eta^{-2/5}).
\end{eqnarray*}
These formulas are used to calculate the expression for the total derivative of the function $V(R,\Psi,\eta)$, which happens to have a sign-definite leading term of the asymptotics:
\begin{gather*}
\frac{dV}{d\eta}\Big|_\eqref{ham} = -  \frac{1}{5\eta} [R^2+\Psi^2+\mathcal O(d^3)]+\mathcal O(d^2)\mathcal O(\eta^{-6/5}).
\end{gather*}
Since the remainders in the latter expression can be made arbitrarily small, it follows that for all $0<\sigma<1$ there exist $d_0>0$ and $\eta_0>0$ such that
\begin{gather*}
    \frac{dV}{d\eta}\Big|_\eqref{ham} \leq - \frac{(1-\sigma)d^2}{5\eta}
\end{gather*}
for all $(R,\Psi,\eta)\in\mathcal D(d_0,\eta_0)$.
In addition, the Lyapunov function has the following property: for all $0<\epsilon<d_0$ there exist $\delta(\epsilon)=\epsilon\sqrt{(1-\sigma)/(2+\sigma)}$ such that
\begin{gather*}
\sup_{d<\delta(\epsilon),\eta>\eta_0} V(R,\Psi,\eta)\leq (1+\sigma) \delta^2(\epsilon)< (1-\sigma) \epsilon^2 \leq \inf_{d=\epsilon,\eta>\eta_0}V(R,\Psi,\eta).
\end{gather*}
The last estimates and the negativity of the total derivative of the function $V(R,\Psi,\eta)$ ensure that any solution of system \eqref{ham} with initial data $\sqrt{R^2(\eta_0)+\Psi^2(\eta_0)}\leq \delta(\epsilon)$ cannot leave $\epsilon$-neighborhood of the equilibrium $(0,0)$ as  $\eta>\eta_0$. Therefore, the fixed point $(0,0)$ is stable as $\eta>\eta_0$. The stability on the finite time interval $(0,\eta_0]$ follows from the theorem on the continuity of the solution to the Cauchy problem with respect to the initial data.

Let us show that the fixed point $(0,0)$ is asymptotically stable. Indeed, consider the solution $R(\eta)$, $\Psi(\eta)$ to system \eqref{ham} with initial data $\sqrt{R^2(\eta_0)+\Psi^2(\eta_0)}\leq d_0$, then the function $v(\eta)\equiv V(R(\eta),\Psi(\eta),\eta)$ satisfies the inequality:
\begin{gather}
    \label{LFD}
        \frac{dv}{d\eta}\leq -\frac{ \varsigma v}{\eta}, \quad \varsigma=\frac{(1-\sigma)}{5(1+\sigma)}>0.
\end{gather}
Integrating the last expression with respect to $\eta$, we obtain $0\leq v(\eta)\leq v_0 \eta^{-\varsigma}$ with the parameter $v_0$, depending on $d_0$ and $\eta_0$. This implies the asymptotic estimate: $R^2(\eta)+\Psi^2(\eta)=\mathcal O(\eta^{-\varsigma})$ as $\eta>\eta_0$. Returning to the original variables we derive asymptotic estimates \eqref{asth} with $\tau_0=\eta_0^{4/5}\varkappa^{-4/5}$ for solutions to system \eqref{MS} with initial data $(r_0,\phi_0)$ from a neighbourhood of  the point $(\rho_\ast(\tau_0),\psi_\ast(\tau_0))$.
 \end{proof}

Let us consider again the case $\nu=0$. If $|\delta|<1/2$, system \eqref{MS} has two particular autoresonant solutions with $\psi_0=0$ and $\psi_0=\pi$, and if $|\delta|>1/2$, there is an additional pair of solutions with $\psi_0=\pm\arccos(2\delta)^{-1}$. It is easy to calculate that $\mathcal P'(0;\delta,0)=2\delta-1$. Hence the unstable solution $\rho_\ast(\tau)$, $\psi_\ast(\tau)$ with $\psi_0=0$ becomes asymptotically stable when the parameter $\delta=\mu \lambda^{1/2}$ exceeds the threshold value $\delta_0=1/2$. At the same time, $\mathcal P'(\pi;\delta,0)=2\delta+1$, then the stability of the solution with $\psi_0=\pi$ persists as $\delta>-1/2$ and is lost when $\delta<-1/2$. However, since $\mathcal P'(\pm \arccos(2\delta)^{-1};\delta,0)=(1-4\delta^2)/(2\delta)$, the additional solutions are unstable as $\delta>1/2$. Thus the points $(\delta_0,0)\in\Gamma_{+}$ and $(-\delta_0,0)\in\Gamma_{-}$ correspond to non-autonomous version of the center-saddle bifurcation: stable and unstable solutions coalesce and disappear. The situation is similar when $\nu\neq 0$, and the threshold value $\delta_\nu$ for unstable solutions can be found from the equation $\mathcal P'(\psi_0;\delta_\nu,\nu)=0$.

\section{Asymptotic analysis}
\label{sec4}

The stability of the particular solutions $\rho_\ast(\tau)$, $\psi_\ast(\tau)$ ensures the existence of a two-parameter family of autoresonance solutions. At this section we construct the asymptotics for
such solutions by averaging method~\cite{AKN06} with some modifications, related to the involvement of the Lyapunov function, constructed in the previous section.

\begin{Th}
If $\mathcal P'(\psi_0;\delta,\nu)>0$, then system \eqref{MS} has two-parameter family of autoresonant solutions $\rho(\tau;a,\varphi)$, $\psi(\tau;a,\varphi)$ with the asymptotics
\begin{gather}
\label{genas}
\rho(\tau)=\sqrt{\lambda \tau}+\tau^{-1/4}\sum_{k=1}^\infty \rho_k(S;a)\tau^{-k/8}, \quad \psi(\tau)=\psi_0+\sum_{k=1}^\infty \psi_k(S;a)\tau^{-k/8}, \quad \tau\to\infty,
\end{gather}
where the functions $\rho_k(S;a)$, $\psi_k(S;a)$ are $2\pi$-periodic in $S$, and the function $S(\tau)$ has the following asymptotics as $\tau\to\infty${\rm :}
\begin{gather*}
    S(\tau)=\varphi+\frac{8 \omega_0 }{5}\lambda^{1/2}\tau^{5/4}+\sum_{k=1}^4 c_{-k}(a)\tau^{k/4}+c_{-6}(a)\log \tau+\sum_{k=1}^\infty c_k(a)\tau^{-k/4}, \quad c_k(a)={\hbox{\rm const}},
\end{gather*}
$\rho_1=a\omega_0\cos S$, $\psi_1=a\sin S$, $c_{-4}=a^2  \mathcal P'''(\psi_0;\delta,\nu) /(16 \omega_0)$, $\omega_0=(\mathcal P'(\psi_0;\delta,\nu))^{1/2}(4\lambda)^{-1/4}$.
\end{Th}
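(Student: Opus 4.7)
The plan is to combine a formal multi-scale construction in the coordinates introduced in Theorem 3 with a rigorous justification based on the Lyapunov function \eqref{LF}. I would first apply the substitution $\rho = \rho_\ast + \omega_0\tau^{-1/4}R$, $\psi = \psi_\ast + \Psi$, $\eta = \varkappa\tau^{5/4}$, bringing system \eqref{MS} into the near-Hamiltonian form \eqref{ham}. Under the stability hypothesis $\mathcal P'(\psi_0;\delta,\nu) > 0$ the leading Hamiltonian $H_0$ has a nondegenerate center at the origin, with linearised frequency $\Omega_0 = 2\omega_0\lambda^{1/2}$ in $\eta$; the Lyapunov decay estimate from Theorem 3 confirms that nearby solutions remain small. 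Introduce action--angle coordinates $(I,S)$ of $H_0$, so that the unperturbed motion is $dS/d\eta = \omega(I)$, $dI/d\eta = 0$ with $\omega(0) = \Omega_0$ and $\omega(I) = \Omega_0 + \omega_1 I + \ldots$; a standard normal-form computation using the cubic and quartic content $\tfrac{1}{6}\mathcal P''(\psi_0;\delta,\nu)\Psi^3$, $\tfrac{1}{24}\mathcal P'''(\psi_0;\delta,\nu)\Psi^4$ of $H_0$ yields the amplitude-dependent frequency shift that will produce the stated value of $c_{-4}(a)$.

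Next, seek the two-parameter family by near-identity averaging: write $(I,S) = (\bar I + \tilde I,\bar S + \tilde S)$ with $\tilde I,\tilde S$ $2\pi$-periodic in $\bar S$ and admitting asymptotic expansions in $\eta^{-1/10}$, and let the slow variables obey
\begin{gather*}
\frac{d\bar S}{d\eta} = \omega(\bar I) + \sum_{j\ge 1}\Omega_j(\bar I)\,\eta^{-j/10},\qquad \frac{d\bar I}{d\eta} = \sum_{j\ge 1}A_j(\bar I)\,\eta^{-j/10-1}.
\end{gather*}
Substituting into \eqref{ham} and equating coefficients of $\eta^{-j/10}$ yields at each order a pair of linear equations for $\tilde I_j,\tilde S_j$ whose Fredholm solvability conditions (vanishing of the mean in $\bar S$ of the fundamental harmonic) uniquely determine the slow drift coefficients $\Omega_j,A_j$, after which $\tilde I_j,\tilde S_j$ are obtained by quadrature. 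The first step fixes $\bar I_\infty = \tfrac{1}{2}a^2$ as the conserved leading amplitude and, after the inverse action--angle map and the transformation back to $(\rho,\psi)$, produces $\rho_1 = a\omega_0\cos S$ and $\psi_1 = a\sin S$.

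Integrating the slow ODEs gives $\bar I(\eta) = \bar I_\infty + o(1)$ and $\bar S(\eta) = \varphi + \Omega_0\eta + \sum_j \tilde c_j(a)\,\eta^{1-j/10} + \tilde c_\ast(a)\log\eta + O(\eta^{-1/10})$, where the logarithm arises as the primitive of the uniquely resonant term $\Omega_{10}(a)\,\eta^{-1}$ in $d\bar S/d\eta$. Converting back via $\eta = \varkappa\tau^{5/4}$ turns $\eta^{1-j/10}$ into $\tau^{5/4-j/8}$, producing the finite list of positive secular powers $\tau^{k/4}$ for $k = 1,\dots,4$ (from $j = 8,6,4,2$), the $\log\tau$ term (from $j = 10$), and the descending tail $\sum_k c_k(a)\tau^{-k/4}$ (from $j = 12, 14, \dots$). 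This matches the stated form of $S(\tau)$, and inserting into the inverse action--angle map reproduces the full expansion \eqref{genas}.

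For the rigorous justification, truncate the formal series at a large order $N$ to obtain an approximate solution $(\rho_N,\psi_N)$ depending smoothly on $(a,\varphi)$, with residual of order $\tau^{-(N+1)/8}$ when substituted into \eqref{MS}. In the coordinates of Theorem 3, the deviation from a true solution satisfies a perturbation of \eqref{ham}, and the Lyapunov function \eqref{LF} yields a differential inequality $dV/d\eta \le -\varsigma V/\eta + C_N\eta^{-q_N}$ with $q_N$ arbitrarily large by taking $N$ large; integration as in \eqref{LFD} bounds the deviation and secures \eqref{genas} to the required precision, the two free parameters $(a,\varphi)$ coming from the two-dimensional initial data at a reference time. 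The hardest step will be the algebraic bookkeeping of the recursion: verifying that the solvability conditions close at every order, isolating the unique resonant exponent $j = 10$ producing $\log\tau$, and confirming that $\Omega_j(\bar I) = 0$ for all odd $j$ so that the nonlisted fractional powers $\tau^{7/8}, \tau^{5/8}, \tau^{3/8}$ do not appear in the secular part of $S(\tau)$. The Lyapunov comparison itself is a direct extension of the apparatus built in the proof of Theorem 3.
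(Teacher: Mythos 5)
Your overall strategy (pass to the scaled variables \eqref{exch2}, reduce to the near-Hamiltonian system \eqref{ham}, go to angle--action-type variables of $H_0$, average, integrate the slow equations, and transform back) is the same skeleton as the paper's proof; the paper's only real twist is to use the Lyapunov function \eqref{LF} itself as the action-like variable $v$, which makes the averaged equation immediate. However, your middle step contains a genuine error that breaks the argument. Your ansatz $d\bar I/d\eta=\sum_{j\ge 1}A_j(\bar I)\,\eta^{-j/10-1}$ starts at order $\eta^{-11/10}$ and therefore omits the dominant drift term of order exactly $\eta^{-1}$. The non-Hamiltonian parts of \eqref{ham} (the terms $-\eta^{-1}R\Psi/5$ in $H$ and $\eta^{-1}\Psi/5$ in $F$, i.e.\ $F_2$) produce, after averaging, $d\bar I/d\eta=-\bar I/(5\eta)+\dots$ (in the paper's variables, $w_0(s,v)=-v/5+\mathcal O(v^{3/2})$), so the action is \emph{not} asymptotically conserved: it decays like $\eta^{-1/5}$. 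Your conclusions ``$\bar I(\eta)=\bar I_\infty+o(1)$'' and ``$\bar I_\infty=\tfrac12 a^2$ is the conserved leading amplitude'' are inconsistent with the theorem you are proving: a nonzero limiting action would give oscillatory corrections to $\rho$ of size $\tau^{-1/4}$ rather than the stated $\tau^{-3/8}$ (i.e.\ it would destroy the $\tau^{-1/8}$ grading of \eqref{genas}), and it would make the leading secular coefficient of $S(\tau)$ equal to $\omega(\bar I_\infty)$, hence $a$-dependent, contradicting the stated $a$-independent term $\tfrac{8\omega_0}{5}\lambda^{1/2}\tau^{5/4}$. The decay $\bar I\sim\mathrm{const}\cdot\eta^{-1/5}$ is precisely what yields $\rho_1=a\omega_0\cos S$, $\psi_1=a\sin S$ at order $\tau^{-1/8}$, and $c_{-4}=a^2\mathcal P'''(\psi_0;\delta,\nu)/(16\omega_0)$ via $\int\omega'(0)h(\eta)\,d\eta\sim\eta^{4/5}$; so the $\eta^{-1}$ drift must be built into the averaged system from the start, as the paper does through the variable $v=\widetilde V$.

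A secondary weak point is the justification. The Lyapunov function \eqref{LF} is centered at the particular solution $(\rho_\ast,\psi_\ast)$, not at your truncated oscillatory approximation $(\rho_N,\psi_N)$, so the inequality $dV/d\eta\le-\varsigma V/\eta+C_N\eta^{-q_N}$ does not directly control the deviation of a true solution from $(\rho_N,\psi_N)$; you would have to rebuild a Lyapunov (or Gronwall) estimate along the oscillating approximate solution, which is not a ``direct extension'' of Theorem~3, and the small exponent $\varsigma$ makes a naive comparison insufficient. The paper avoids this by invoking the justification theorems of Kalyakin \cite{LK08JMS} for asymptotically Hamiltonian systems. Finally, your concern about odd powers $\eta^{-j/10}$ in the secular part is an artifact of your grading: once the averaged equations are set up correctly, $W$ and $U$ expand in powers of $\eta^{-1/5}$, so the secular expansion of the phase automatically proceeds in $\tau^{-1/4}$, and only the amplitudes pick up the $\tau^{-1/8}$ steps through $\sqrt{\hat v}$.
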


\begin{proof}
We make change of variables \eqref{exch2} in system \eqref{MS} and study the asymptotics for solutions to near-Hamiltonian system \eqref{ham} in a neighbourhood of the stable fixed point $(0,0)$.
Let us consider the Hamiltonian system:
\begin{gather*}
    \frac{dR}{d\eta}=-\partial_\Psi H_0(R,\Psi), \quad \frac{d\Psi}{d\eta}=\partial_R H_0(R,\Psi),
\end{gather*}
where $\displaystyle H_0(R,\Psi)=\lim_{\eta\to\infty} H(R,\Psi,\eta)$, $H_0(0,0)=0$. It follows from the properties of the function $H_0(R,\Psi)$ that the level lines $H_0(R,\Psi)=h$ define a family of closed curves on the phase space $(R,\Psi)$ parametrized by the parameter $h\in (0, h_0)$, $h_0={\hbox{\rm const}}>0$ (see Fig.~\ref{LevL}). To each closed curve there corresponds a periodic solution $R_0(\eta,h)$, $\Psi_0(\eta,h)$ of period $T(h)=2\pi/\omega(h)$,  where $\omega(h)=2\omega_0\lambda^{1/2}+h\mathcal P'''(\psi_0;\delta,\nu)/(16 \omega_0^2\lambda^{1/2})+\mathcal O(h^2)$ as $h\to 0$.
\begin{figure}
\vspace{-2ex} \centering \subfigure[]{
\includegraphics[width=0.45\linewidth]{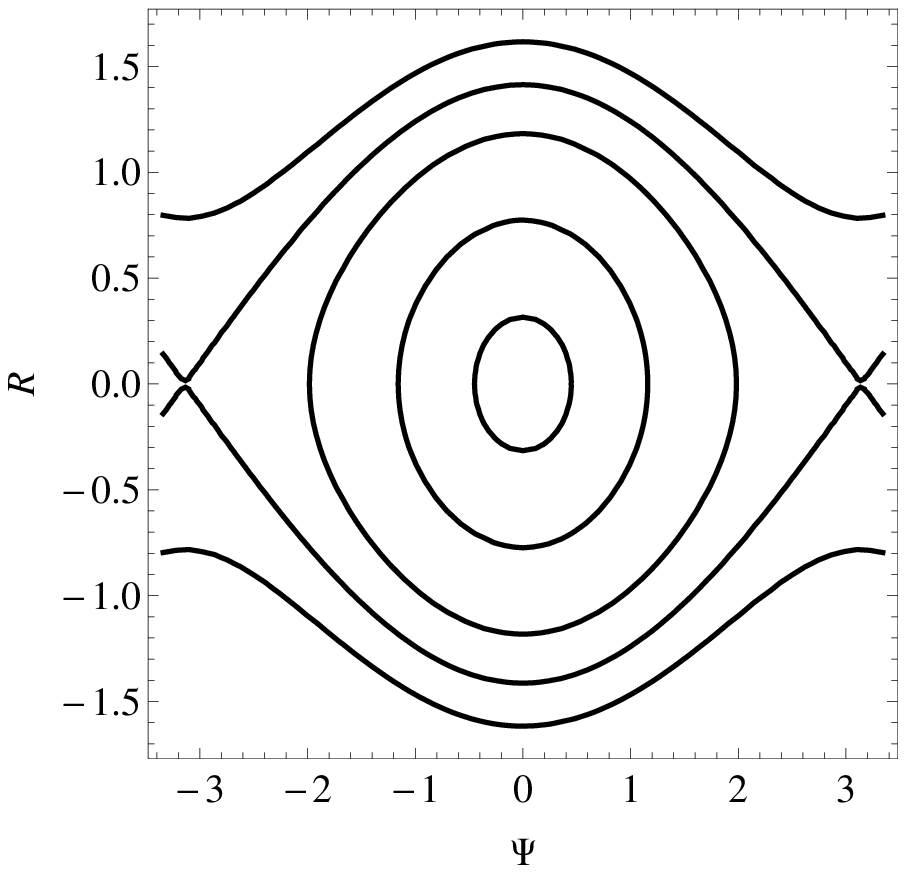}}
\hspace{4ex}
\subfigure[]{
\includegraphics[width=0.45\linewidth]{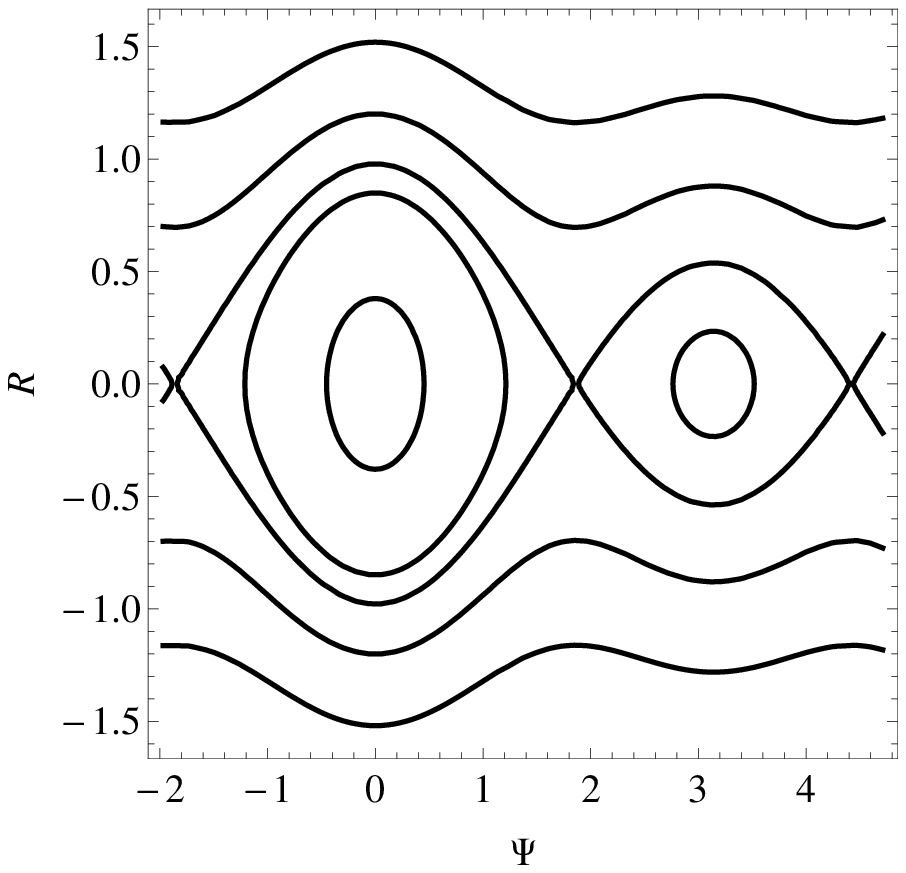}
}
\caption{Level lines of the Hamiltonian $H_0(R,\Psi)$; $\lambda=1$, $\nu=0$, $\psi_0=\pi$, (a) $\delta=0$, (b) $\delta=1$.} \label{LevL}
\end{figure}

Define auxiliary $2\pi$-periodic functions $r(s,h)=R_0(s/\omega(h),h)$ and $p(s,h)=\Psi_0(s/\omega(h),h)$, satisfying the equations:
\begin{gather*}
    \omega(h)\frac{\partial r}{\partial s}=-\partial_p H_0(r,p), \quad \omega(h)\frac{\partial p}{\partial s}=\partial_r H_0(r,p),
\end{gather*}
and consider the transformation of variables
\begin{gather}
\label{exch1}
    R(\eta)=r(s(\eta),h(\eta)), \quad \Psi(\eta)=p(s(\eta),h(\eta))
\end{gather}
in system \eqref{ham}. The Jacobian of this transformation is different from zero:
\begin{gather*}
    \begin{vmatrix}
        \partial_h r& \partial_h p \\
        \partial_s r & \partial_s p
    \end{vmatrix} = \frac{1}{\omega(h)}\neq 0,
\end{gather*}
It can easily be checked that in new variables system \eqref{ham} takes the form:
\begin{gather}
\label{sys2}
    \frac{dh}{d\eta}=-\omega(h) \Big[\partial_s \widetilde H(s,h,t) +\partial_s r \widetilde F(s,h,t)\Big], \quad
    \frac{ds}{dt}=\omega(h)\Big[\partial_h \widetilde H(s,h,t) + \partial_h r \widetilde F(s,h,t)\Big],
\end{gather}
where $\widetilde F(s,h,\eta)\equiv F(r(s,h),p(s,h),\eta)$ and $\widetilde H(s,h,\eta)\equiv H(r(s,h),p(s,h),\eta)$ are $2\pi$-periodic functions with respect to $s$. Moreover, system \eqref{sys2} is asymptotically Hamiltonian with decaying non-Hamiltonian terms as $\eta\to\infty$. Solutions to such systems can be investigated by the averaging method. To simplify asymptotic constructions, we make one more transformation of variables. Introduce a new dependent variable $v(\eta)$ associated with the Lyapunov function \eqref{LF} such that
\begin{gather}
\label{exchLF}
v(\eta)=\widetilde V(s(\eta),h(\eta),\eta),
\end{gather}
where $\widetilde V(s,h,\eta)\equiv V(r(s,h),p(s,h),\eta)$ is $2\pi$-periodic function in $s$. It follows from the properties of the Lyapunov function that $\widetilde V$ and its total derivative with respect to $\eta$ have the following asymptotics
\begin{gather*}
    \widetilde V(s,h,\eta)= \frac{h}{\omega_0\lambda^{1/2}}+\mathcal O(h)\mathcal O(\eta^{-3/5}), \quad
    \frac{d\widetilde V}{d\eta}\Big|_{\eqref{ham}}=- \frac{1}{5\eta}\Big [\frac{h}{\omega_0\lambda^{1/2}}+\mathcal O(h^{3/2})\Big]+\mathcal O(h)\mathcal O(\eta^{-6/5})
\end{gather*}
 as $\eta>\eta_0$, $h<h_0$, uniformly with respect to $s\in\mathbb R$. Hence the transformation $(s,h)\mapsto (s,v)$ is invertible: $\partial_h \widetilde V \neq 0$. System \eqref{sys2} in new variables takes the form:
\begin{gather}
\label{sys1}
    \frac{dv}{d\eta}= W(s,v,\eta), \quad \frac{ds}{d\eta}=U(s,v,\eta),
\end{gather}
where
\begin{gather}
\begin{split}
\label{WU}
    W(s,\widetilde V(s,h,\eta),\eta)&\equiv \big[-\partial_r V\partial_p H+\partial_p V (\partial_r H+F) +\partial_\eta V\big](r(s,h),p(s,h),\eta), \\
     U(s,\widetilde V(s,h,\eta),\eta)&\equiv \omega(h)\big[\partial_h \widetilde H+\partial_h r \widetilde F\big](s,h,\eta).
    \end{split}
\end{gather}
It is not difficult to deduce from \eqref{WU} and \eqref{LFD} the asymptotics of the functions $W(s,v,\eta)$ and $U(s,v,\eta)$ as $\eta\to\infty$:
\begin{gather*}
      W(s,v,\eta)=\eta^{-1}\sum_{k=0}^\infty  w_k(s,v) \eta^{-k/5}, \quad     U(s,v,\eta)=u_0(v)+\sum_{k=1}^\infty u_k(s,v) \eta^{-k/5},
\end{gather*}
where $w_k(s,v)$ and $u_k(s,v)$ are $2\pi$-parametric functions with respect to $s$, $w_0(s,v)=-v/5 +\mathcal O(v^{3/2})$, $u_0(v)=\omega(0)+\omega_0\lambda^{1/2}\omega'(0)v+\mathcal O(v^2)$ as $v\to 0$.
To construct asymptotic solutions to system \eqref{sys1}, it is convenient to single out a Hamiltonian part: \begin{gather}
\label{WUHam}
    \frac{dv}{d\eta}= -\partial_s Q(s,v,\eta) + G(s,v,\eta), \quad \frac{ds}{d\eta}=\partial_v Q(s,v,\eta),
\end{gather}
where
\begin{gather*}
    Q(s,v,\eta)=\int  U(s,v,\eta)\, dv, \quad  G(s,v,\eta)=W(s,v,\eta)+\partial_s Q(s,v,\eta).
\end{gather*}
The asymptotic solution to the first equation in \eqref{WUHam} is sought in the form~\cite{LK08JMS}:
$v(\eta)=\hat v(\eta)+\vartheta (s,\hat v(\eta),\eta)$, where
$\hat v(\eta)$ is determined from the averaged equation
\begin{gather}
\label{averA}
    \frac{d\hat v}{d\eta}= \langle G(s,\hat v+\vartheta(s,\hat v,\eta),\eta)\rangle_s=\langle W(s,\hat v+\vartheta(s,\hat v,\eta),\eta)\rangle_s,
\end{gather}
and $\vartheta (s,\hat v,\eta)$ is $2\pi$-periodic function with respect to $s$ such that
\begin{gather*}
    \langle \vartheta (s,\hat v,\eta)\rangle_s \stackrel{def}{=}\frac{1}{2\pi}\int\limits_0^{2\pi} \vartheta (s,\hat v,\eta) ds\equiv 0,
\end{gather*}
and satisfies the following equation
\begin{eqnarray*}
    \partial_\sigma Q(s,\hat v+\vartheta,\eta)+\partial_v Q(s,\hat v+\vartheta,\eta)\partial_s \vartheta+\partial_\eta \vartheta & = & G(s,\hat v+\vartheta,\eta) \\
        & & -[1+\partial_{\hat v} \vartheta] \langle G(s,\hat v+\vartheta(s,\hat v,\eta),\eta)\rangle_s.
\end{eqnarray*}
Note that the last equation can be integrated with respect to $s$ by choosing the constant of integration in such a way that the result has a zero average:
\begin{gather}
\label{int}
\begin{split}
    Q(s,\hat v+\vartheta(s,\hat v,\eta),\eta)-\langle Q(s,\hat v+\vartheta(s,\hat v,\eta),\eta)\rangle_s+\partial_\eta \int\vartheta(s,\hat v,\eta)\,ds  =  \\
   = \int \Big[W(s,\hat v+\vartheta,\eta)-  \langle W(s,\hat v+\vartheta(s,\hat v,\eta),\eta)\rangle_s \Big]ds \\
    - \langle W(s,\hat v+\vartheta(s,\hat v,\eta),\eta)\rangle_s \partial_{\hat v} \int \vartheta(s,\hat v,\eta) ds.
\end{split}
\end{gather}
The asymptotic for $\vartheta(s,\hat v,\eta)$ is sought in the form:
\begin{gather*}
    \vartheta (s,\hat v,\eta)=\eta^{-1}\sum_{k=1}^\infty \vartheta_k(s,\hat v)\eta^{-k/5} .
\end{gather*}
Substituting this series into equation \eqref{int}, and equating the terms of the same power of $\eta$, we obtain the following chain of equations: $u_0(\hat v)   \vartheta_k(s,\hat v) = \Lambda_k (s,\hat v)$, $k\geq 1$, where each function $\Lambda_k(s,\hat v)$ is expressed through $\vartheta_1$, $\dots$, $\vartheta_{k-1}$ such that $\langle \Lambda_k(s,\hat v)\rangle_s =0$. For example,
\begin{eqnarray*}
\Lambda_1(s,\hat v) & = &
\big\langle q_1(s,\hat v)\big\rangle_s-q_1(s,\hat v),\\
\Lambda_{2}(s,\hat v) & = & \big \langle q_2(s,\hat v)\big\rangle_s-q_2(s,\hat v)  + \big \langle u_1(s,\hat v)\vartheta_1(s,\hat v)\big\rangle_s-u_1(s,\hat v)\vartheta_1(s,\hat v)  \\
& & - \frac{u_0'(\hat v)}{2}\Big(\vartheta^2(s,\hat v)-\big\langle \vartheta^2(s,\hat v)\big\rangle_s\Big),
\end{eqnarray*}
where $q_k(s,v)=\int u_k(s,v)\,dv$. Thus, all coefficients $\vartheta_k$ are uniquely determined in the class of $2\pi$-periodic functions with zero average $\langle \vartheta_k(s,\hat v)\rangle_s=0$.

The next step consists in the asymptotic integration of the equation for the angle $s$. The solution is sought in the form $s(\eta)= \hat s(\eta)+\theta(\hat s(\eta),\hat v(\eta),\eta)$, where
\begin{gather}
\label{avs}
    \frac{d \hat s}{d\eta}=\big\langle \partial_v Q\big(\hat s+\theta(\hat s,\hat v,\eta),\hat v+\vartheta(\hat s+\theta(\hat s,\hat v,\eta),\hat v,\eta),\eta\big)\big\rangle_{\hat s},
\end{gather}
and the function $\theta(\hat s,\hat v,\eta)$ satisfies the following equation:
\begin{gather}
\label{psieq}
\begin{split}
   \big(\partial_{\hat s}\theta +1 \big) \big\langle \partial_v Q\big(\hat s+\theta(\hat s,\hat v,\eta),\hat v+\vartheta(\hat s+\theta(\hat s,\hat v,\eta),\hat v,\eta),\eta\big)\big\rangle_{\hat s} +\partial_\eta\theta = \\
   = \partial_v Q\big(\hat s+\theta(\hat s,\hat v,\eta),\hat v+\vartheta(\hat s+\theta(\hat s,\hat v,\eta),\hat v,\eta),\eta\big) - \partial_{\hat v} \theta \langle W(s,\hat v+\vartheta(s,\hat v,\eta),\eta)\rangle_s,
   \end{split}
\end{gather}
with the additional condition: $\langle \theta(\hat s,\hat v,\eta)\rangle_{\hat s}=0$. Asymptotics for $\theta(\hat s,\hat v,\eta)$ is constructed in the form:
\begin{gather*}
    \theta(\hat s,\hat v,\eta)=\sum_{k=1}^{\infty} \theta_k(\hat s,\hat v)\eta^{-k/5}.
\end{gather*}
Substituting the series into equation \eqref{psieq} and grouping the expressions of the same power of $\eta$ give the following chain of differential equations: $u_0(\hat v)\partial_{\hat s} \theta_k(\hat s,\hat v)=\Delta_k(\hat s,\hat v) - \langle \Delta_k(\hat s,\hat v)\rangle_{\hat s}$, $k\geq 1$. Note that each function $\Delta_k(\hat s,\hat v)$ is expressed through $\theta_1$, $\dots$, $\theta_{k-1}$. For example, \begin{eqnarray*}
    \Delta_1(\hat s,\hat v)&=&u_1(\hat s,\hat v)+u_0'(\hat v) \vartheta_1(\hat s,\hat v), \\
    \Delta_2(\hat s,\hat v)&=&
        u_2(\hat s,\hat v)+u_0'(\hat v)\big( \vartheta_2(\hat s,\hat v)+\partial_{ s}\vartheta_1(\hat s,\hat v)\theta_1(\hat s,\hat v)\big)+ \frac{u_0''(\hat v)}{2}\vartheta^2_1(\hat s,\hat v)\\
    & &  + \partial_v u_1(\hat s,\hat v)\vartheta_1(\hat s,\hat v)+\partial_{ s} u_1(\hat s,\hat v)\theta_1(\hat s,\hat v) \\ & & -\partial_{ s}\theta_1(\hat s,\hat v) \big\langle u_0'(\hat v)\vartheta_1(\hat s,\hat v)+u_1(\hat s,\hat v) \big\rangle_{\hat s}.
\end{eqnarray*}
It follows that all coefficients $\theta_k$ are uniquely determined in the class of $2\pi$-periodic functions with $\langle \theta_k(\hat s,\hat v)\rangle_{\hat s}=0$.

In the last step, we integrate the averaged equations \eqref{averA} and \eqref{avs}. Let us remark that the stability of the trivial solution to system \eqref{ham} ensures that $v(\eta)\to 0$ as $\eta>\eta_0$. Therefore, we can use the asymptotic behavior of the function $W(s,v,\eta)$ as $v\to 0$ and $\eta\to\infty$ in the asymptotic integration of the averaged equations. Thus we have
\begin{gather*}
    \frac{d \hat v}{d\eta}=- \frac{1}{5\eta}[\hat v+\mathcal O(\hat v^{3/2})][1+\mathcal O(\eta^{-1/5})].
\end{gather*}
It follows from the last equation that $\hat v(\eta)=v_0\eta^{-1/5}[1+o(1)]$ with an arbitrary parameter $v_0>0$. Moreover, there is a complete asymptotic expansion:
\begin{gather*}
   \hat v(\eta)=v_0\eta^{-1/5}+\sum_{k=1}^\infty v_k(v_0) \eta^{-(k+1)/5}, \quad \eta\to\infty, \quad v_k(v_0)={\hbox{\rm const}}.
\end{gather*}

The averaged equation \eqref{avs} for the angle is integrated trivially:
\begin{gather*}
    \hat s(\eta)=s_0+\int\limits  \big\langle \partial_v Q\big(\hat s+\theta(\hat s,\hat v(\eta),\eta),\hat v(\eta)+\vartheta(\hat s+\theta(\hat s,\hat v(\eta),\eta),\hat v(\eta),\eta),\eta\big)\big\rangle_{\hat s}\, d\eta,
\end{gather*}
where $s_0$ is the integration parameter. The asymptotics of $\hat s(\eta)$ is defined by the expansions for $\hat v$, $\vartheta$, and $\theta$. Thus we have
\begin{gather*}
    \hat s(\eta)=s_0+\sum_{k=1}^{5} s_{-k}(v_0)\eta^{k/5}  +s_{-6}(v_0)\log \eta +  \sum_{k=1}^\infty s_k(v_0)\eta^{-k/5}, \quad \eta\to\infty,
\end{gather*}
where $s_k(v_0)={\hbox{\rm const}}$. In particular, $s_{-5}=2\omega_0\lambda^{1/2}$ and $s_{-4}=5v_0\omega_0 \lambda^{1/2} \omega'(0)/4$. The justification of the constructed asymptotics follows from~\cite{LK08JMS}.

The transformation formulas \eqref{exch1} and \eqref{exchLF} allow us to reconstruct the asymptotics for solutions to system \eqref{ham}:
\begin{gather*}
    R(\eta)=  v_0^{1/2} \eta^{-1/10}\big[\cos \hat s(\eta) +\mathcal O(\eta^{-1/5})\big], \quad
    \Psi(\eta)=v_0^{1/2} \eta^{-1/10} \big[\sin \hat s(\eta) +\mathcal O(\eta^{-1/5})\big].
\end{gather*}
Returning to the original variables we obtain the result of the theorem.
\end{proof}

Note that using the Lyapunov function in the averaging method allows us to immediately write down the averaged equation for the action variable, that determines the structure of a complete asymptotic expansion. The proposed method for constructing two-parameter family asymptotic solutions can be applied to the study of various nonlinear non-autonomous systems that can be reduced to a near-Hamiltonian form. The construction of the Lyapunov functions for such systems was discussed in~\cite{LKOS13}.

\section{Conclusion}

The results obtained for the model system can be used to describe the capture of autoresonance in systems of the form \eqref{ex} with combined excitation. It can easily be checked that the bifurcation parameter $\delta=\mu\lambda^{1/2}$ is expressed in terms of system \eqref{ex} as follows $\delta=2\beta\varepsilon^{-1} \alpha^{1/2} (3\gamma)^{-1/2} $. Therefore, if $\delta\neq \delta_\nu$, there exist a stable autoresonant mode such that
\begin{gather*}
u(t)= \kappa \rho(\varepsilon t) \cos \Big(\psi(\varepsilon t)-\phi(t)\Big)+\mathcal O(\varepsilon), \quad 0\leq t\leq \mathcal O(\varepsilon^{-1}),
\end{gather*}
where $\rho(\tau)=\sqrt{\lambda\tau}+\mathcal O(\tau^{-3/8})$ and $\psi(\tau)=\psi_0+\mathcal O(\tau^{-1/4})$ as $\tau\to\infty$ with $\psi_0$ satisfying $\mathcal P(\psi_0;\delta,\nu)=0$ and $\mathcal P'(\psi_0;\delta,\nu)>0$. At the same time, the autoresonant mode with $\psi_0$ such that $\mathcal P'(\psi_0;\delta,\nu)<0$ is not observed because it is unstable. However, if $\delta$ passes through the value $\delta_\nu$, this mode becomes stable.

In summary, we have investigated the model system of equations describing the capture into autoresonant for nonlinear oscillators with combined parametric and external excitation. We have considered the isolated autoresonant solutions with a special asymptotics at infinity and have investigated the dependence of their stability on the values of the perturbation parameters. In particular, it was shown that the unstable autoresonant solutions become stable as the bifurcation parameter passes through a threshold value. The behaviour of solutions at the bifurcation points has not been considered here. This will be discussed elsewhere.

\section*{Acknowledgements}
This research was supported by the DAAD. The author is grateful to the University of Potsdam for hospitality and Professor Nikolai Tarkhanov for discussions.


\begin{thebibliography}{99}
\bibitem{LF09}
L.~Friedland,
``Autoresonance in nonlinear systems'',
Scholarpedia \textbf{4}(1):5473 (2009).

\bibitem{AN87}
A. I. Neishtadt,
``Autoresonance in electron cyclotron heating of a plasma'',
Sov. Phys. JETP \textbf{66}, 937--977 (1987).

\bibitem {FF01}
J.~Fajans and L.~Friedland,
``Autoresonant (nonstationary) excitation of pendulums, Plutinos, plasmas, and other nonlinear oscillators'',
Am. J. Phys. \textbf{69}, 1096--1102 (2001).

\bibitem{USM10}
M. Yu.~Uleysky, E. V.~Sosedko, and D. V.~Makarov,
``Autoresonant cooling of particles in spatially periodic potentials'',
Tech. Phys. Lett. \textbf{36}, 1082--1084 (2010).

\bibitem{KHM14}
G. Klughertz, P.-A. Hervieux, and G. Manfredi,
``Autoresonant control of the magnetization switching in single-domain nanoparticles'',
J. Phys. D: Applied Physics \textbf{47}, 34 (2014).

\bibitem{BShF18}
S.V. Batalov, A.G. Shagalov, and L. Friedland,
``Autoresonant excitation of Bose-Einstein condensates'',
Phys. Rev. E, \textbf{97}, 032210 (2018).

\bibitem {LK08}
L. A.~Kalyakin,
``Asymptotic analysis of autoresonance models'',
Rus. Math. Surv. \textbf{63}, 791--857 (2008).

\bibitem{KhM01}
E.~Khain and B.~Meerson,
``Parametric autoresonance'',
Phys. Rev. E \textbf{64}, 036619 (2001).

\bibitem {KG07}
O. M.~Kiselev and S. G.~Glebov,
``The capture into parametric autoresonance'',
Nonlinear Dyn. \textbf{48}, 217--230 (2007).

\bibitem{LF08}
L.~Friedland,
``Efficient capture of nonlinear oscillations into resonance'',
J. Phys. A: Math. Theor. \textbf{41}, 415101 (2008).

\bibitem{PRK02}
A.~Pikovsky, M.~Rosenblum, and J.~Kurths.
{\it Synchronization: a universal concept in nonlinear sciences}
(Cambridge UP, New York, 2002).

\bibitem{LK03}
L. A.~Kalyakin,
``Asymptotic behavior of solutions of equations of main resonance'',
Theoret. and Math. Phys. \textbf{137}, 1476--1484 (2003).

\bibitem{OS16}
O. A. Sultanov,
``Stability of capture into parametric autoresonance'',
Proc. Steklov Inst. Math. \textbf{295}, Suppl. 1, 156--167 (2016).

\bibitem{AN89}
A. N.~Kuznetsov,
``Existence of solutions entering at a singular point of an autonomous system having a formal solution'',
Funct. Anal. Appl. \textbf{23}, 308--317 (1989).

\bibitem{KF13}
V. V. Kozlov and S. D. Furta,
\textit{Asymptotic solutions of strongly nonlinear systems of differential equations}
(Springer, Heidelberg, New York, Dordrecht, London, 2013).

\bibitem{LK14}
L. A.~Kalyakin,
``Existence theorems and estimates of solutions for equations of principal resonance'',
J. Math. Sci. \textbf{200}, 82--95 (2014).

\bibitem{AKN06}
V. I. Arnold, V. V. Kozlov, A. I. Neishtadt,
\textit{Mathematical aspects of classical and celestial mechanics}
(Springer, Berlin, 2006).

\bibitem{LK08JMS}
L. A. Kalyakin,
``Asymptotics at infinity of solutions of equations close to Hamiltonian equations'',
J. Math. Sci. \textbf{157}, 526--542 (2009).

\bibitem{LKOS13}
L. A. Kalyakin and O. A. Sultanov,
``Stability of autoresonance models'',
Differ. Equat. \textbf{49}, 267--281 (2013).
\end{thebibliography}
\end{document}